\documentclass[preprint,12pt]{elsarticle}

\usepackage{amssymb,amsmath}
\usepackage{multirow}
\usepackage{algorithm}
\usepackage{algorithmic}
\usepackage{color}
\usepackage{graphicx}
\usepackage{hyperref}
\usepackage{titlesec}
\titleformat{\section}{\bfseries\large}{\thesection.}{1em}{}
\titleformat{\subsection}{\bfseries}{\thesubsection.}{1em}{}
\usepackage{graphicx}

\usepackage{amsthm} 

\newtheorem{corollary}{Corollary}

\newdefinition{rmk}{Remark}

\newcommand{\be}{\begin{equation}}
\newcommand{\ee}{\end{equation}}
\def\bea{\begin{eqnarray}}
\def\eea{\end{eqnarray}}

\def\dfrac{\displaystyle\frac}

\newcommand{\bbmat}{\left\{ \begin{array}}
\newcommand{\ebmat}{\end{array} \right. }

\def\ln{{\rm ln}}

\def\b0{\boldsymbol{\rm 0}}

\begin{document}

\begin{frontmatter}



\title{$k$-Submodular Cover Problem with a Matroid Constraint}

\author[1]{Xinru Li}
\author[2]{Donglei Du}
\author[1]{Qian Liu}
\author[1]{Yang Zhou}
\author[1,cor1]{Min Li}

\cortext[cor1]{Email: liminemily@sdnu.edu.cn}

\address[1]{School of Mathematics and Statistics, Shandong Normal University, P.R. China.}

\address[2]{Faculty of Management, University of New Brunswick, Canada.}

\begin{abstract}
We investigate $k$-submodular cover problem with a matroid constraint on an $n$-element ground set $V$, which seeks $k$ disjoint subsets of $V$ with minimal cost, where their union is an independent set for some matroid defined on $V$ and the $k$-submodular utility function surpasses a specified threshold. This problem generalizes the $k$-submodular cover problem and has a wide range of applications in artificial intelligence and combinatorial optimization. In this paper, we propose a matroid extended greedy algorithm for this problem, analyze its bi-criteria approximation ratio, and further derive an improved approximation ratio under the special case of unit costs.

\noindent\textbf{keywords:} $k$-submodular cover problem; matroid constraint; greedy algorithm; bi-criteria algorithm.
\end{abstract}






\end{frontmatter}






\section{Introduction}
The $k$-submodular optimization framework extends classical submodular optimization from single-set selection to multi-category assignment. Instead of selecting a single subset of a ground set, a feasible $k$-submodular solution consists of $k$ pairwise disjoint subsets, so that every selected element is assigned to at most one category. This richer domain preserves a generalized diminishing-returns structure while allowing mutually exclusive labels, multiple resource types, and heterogeneous service modes to be modeled in a unified manner~\cite{huber2012towards,ohsaka2015monotone,rafiey2020fast}. Representative applications include multi-topic influence propagation, multi-topic social viral marketing, multi-type sensor placement~\cite{rafiey2020fast,zheng2021maximizing}, multi-class feature selection~\cite{singh2012bisubmodular}, and information-coverage optimization~\cite{ha2024improved,qian2018constrained}.

In many such applications, multi-category assignment is accompanied by additional structural restrictions. For example, seed users selected for different campaigns may be required to satisfy graph-based independence conditions; sensors of different types may be subject to regional quotas or linear-independence requirements; and grouped features may obey upper bounds within predefined blocks. Matroids provide a general abstraction for these constraints, encompassing uniform, partition, graphic, and linear independence systems. These observations motivate the $k$-submodular cover problem with a matroid constraint, denoted by $k$SC-M, in which one seeks a minimum-cost multi-category assignment that reaches a prescribed utility threshold while maintaining matroid feasibility.

Consider a finite ground set $V$ and a positive integer $k$, with $[k] = \{1,2,\dots,k\}$ and $(k+1)^V = \{(V_1,V_2,\dots,V_k) \mid V_i \subseteq V,\ \forall i \in [k],\ V_i \cap V_j = \emptyset,\ \forall i \neq j\}$ as a family of $k$ disjoint sets.
A function $g: (k+1)^V \to \mathbb{R}$ is  called \textit{$k$-submodular} iff for any $\mathbf{x} = (X_1,X_2,\dots,X_k)$ and $\mathbf{y} = (Y_1,Y_2,\dots,Y_k)$ in $(k+1)^V$, we have:
\[
g(\mathbf{x}) + g(\mathbf{y}) \ge g(\mathbf{x} \sqcap \mathbf{y}) + g(\mathbf{x} \sqcup \mathbf{y})
\]
where
\[
\mathbf{x} \sqcap \mathbf{y} = (X_1 \cap Y_1,\ \dots,\ X_k \cap Y_k)
\]
and
\[
\mathbf{x} \sqcup \mathbf{y} = (Z_1,\dots,Z_k),\ \text{where}\ Z_i = X_i \cup Y_i \setminus \bigcup_{j \neq i} (X_j \cup Y_j).
\]
Ordinary submodular functions and bisubmodular functions are recovered as the special cases $k=1$ and $k=2$, respectively.

To place the covering problem studied in this paper within its classical context, we begin with the standard set cover problem.

\noindent\textbf{Definition 1 (Set Cover Problem)}~\cite{williamson2011design}
In the set cover problem, we are given a ground set of elements
$E=\{e_1,\ldots,e_n\}$, some subsets of those elements
$S_1,S_2,\ldots,S_m$, where each $S_j\subseteq E$, and a nonnegative
weight $w_j\geq 0$ for each subset $S_j$. The goal is to find a
minimum-weight collection of subsets that covers all of $E$; that is,
we wish to find an $I\subseteq\{1,\ldots,m\}$ that minimizes
$\sum_{j\in I}w_j$ subject to
$\bigcup_{j\in I}S_j=E$. If $w_j=1$ for each subset $j$, the problem
is called the \textit{unweighted set cover problem}.

\noindent\textbf{Related work.}
The set cover problem is one of the fundamental NP-hard problems in combinatorial optimization. The classical greedy algorithm repeatedly selects a set with the best ratio between newly covered elements and cost, and achieves a logarithmic approximation guarantee~\cite{chvatal1979greedy}. Feige~\cite{feige1998threshold} established the corresponding logarithmic hardness threshold under standard complexity assumptions, showing that the performance of greedy-type algorithms is essentially best possible in the general case.

The submodular cover problem generalizes set cover by replacing the ordinary coverage function with a submodular utility function. Given a ground set, a positive cost function, and a threshold, the objective is to find a minimum-cost subset whose submodular value reaches the threshold. Wolsey~\cite{wolsey1982analysis} established the classical logarithmic guarantee for the greedy algorithm in the monotone setting. Subsequent work developed distributed, streaming, parallel, and query-efficient algorithms for large-scale monotone submodular cover~\cite{mirzasoleiman2015distributed,norouzifard2016efficient,ran2022improved}. More recent studies considered general or non-monotone objectives and obtained bicriteria guarantees through scalable approximation frameworks~\cite{chen2023bicriteria,crawford2023scalable}. These results demonstrate that submodular cover is a flexible extension of set cover, but they operate on a single selected set and therefore do not represent mutually exclusive multi-label assignments.

Building upon the classical submodular cover framework discussed above,
subsequent studies have explored more general cost structures and additional
feasibility requirements. Iyer and Bilmes~\cite{iyer2013submodular}
extended the standard model by allowing the selection cost itself to be a
submodular function and investigated the resulting submodular-cost
submodular-cover problem. From an algorithmic perspective, Chen and
Crawford~\cite{chen2023bicriteria} developed scalable bicriteria
approximation algorithms for monotone, non-monotone, and regularized
submodular cover variants, thereby improving their applicability to
large-scale instances. More recently, structurally constrained formulations
involving fairness and partition requirements have also received increasing
attention~\cite{chen2026bicriteria}. These developments demonstrate that
submodular cover can accommodate richer objective functions and certain
combinatorial restrictions. Nevertheless, they remain within the ordinary
single-set domain and do not model situations in which selected elements
must be assigned to several mutually exclusive categories. 

The $k$-submodular cover problem extends submodular cover from a single selected set to $k$ pairwise disjoint categories. Pham et al.~\cite{pham2022ksubmodularcover} initiated the study of $k$SC and considered the uniform-cost setting. Wang et al.~\cite{wang2023streaming} generalized the model to positive element costs and developed one-pass and two-pass streaming bicriteria algorithms. More recently, Nguyen et al.~\cite{nguyen2026efficient} proposed deterministic polynomial-time algorithms, including Extended Greedy and FBi, that improve the cost dependence and oracle complexity for unconstrained $k$SC; a stochastic greedy approach was also developed to further reduce the number of function evaluations~\cite{nguyen2025fast}. Nevertheless, all of these $k$SC algorithms allow the support of the selected $k$-assignment to be an arbitrary subset of the ground set. Consequently, they cannot directly enforce linear independence, graph acyclicity, partition quotas, or other matroid constraints. Nguyen et al.~\cite{nguyen2026efficient} proposed two offline deterministic
polynomial-time algorithms, called Extended Greedy and FBi, for the general
$k$SC problem. In the uniform-cost setting, Extended Greedy achieves a
bicriteria approximation ratio of $\left(
\alpha\left(1+\ln\frac{1}{\epsilon}\right),
\alpha(1-\epsilon)
\right)$, where $\alpha=1/2$ for monotone objectives and $\alpha=1/3$ for
non-monotone objectives, using $O(n^2k)$ oracle queries. For monotone
objectives, the FBi algorithm further reduces the query complexity to
$O(nk\log n)$ for constant accuracy parameters by employing a cost-based
partitioning and guessing strategy.

A complementary line of research concerns constrained $k$-submodular maximization. Ward and \v{Z}ivn\'y~\cite{ward2014maximizing} obtained the first constant-factor approximation algorithms for unconstrained $k$-submodular maximization. Iwata et al.~\cite{iwata2016improved} subsequently derived a $1/2$-approximation for nonnegative non-monotone objectives and a $k/(2k-1)$-approximation for monotone objectives. Ohsaka and Yoshida~\cite{ohsaka2015monotone} studied monotone maximization under total-size and individual-size constraints, corresponding to uniform- and partition-matroid-type restrictions. Sakaue~\cite{sakaue2016onmaximizing} extended this line of research to a general matroid constraint and proved that a greedy algorithm achieves a $1/2$-approximation for monotone $k$-submodular maximization. Later work further investigated streaming, privacy-preserving, and query-efficient algorithms under cardinality and matroid constraints~\cite{ene2022streaming,rafiey2020fast,niu2023fast}. Although these results provide important techniques for handling the interaction between $k$-submodularity and matroid exchange, they address utility maximization under feasibility or budget constraints rather than minimum-cost threshold covering.

The preceding discussion reveals a gap between two established research directions: constrained $k$-submodular maximization incorporates matroid feasibility but does not address threshold covering, whereas existing $k$-submodular cover algorithms address the threshold requirement but do not impose a general matroid constraint. To bridge this gap, for a $k$-assignment $\mathbf{x}=(X_1,\ldots,X_k)$, we write
\[
\operatorname{supp}(\mathbf{x})=\bigcup_{i=1}^{k}X_i
\qquad\text{and}\qquad
c(\mathbf{x})=\sum_{e\in\operatorname{supp}(\mathbf{x})}c(e).
\]
Given a $k$-submodular utility function $g$, a threshold $T$, and a matroid $(V,\mathcal{I})$, the problem studied in this paper is
\[
\min_{\mathbf{x}\in(k+1)^V} c(\mathbf{x})
\qquad
\text{s.t.}
\qquad
g(\mathbf{x})\geq T,
\qquad
\operatorname{supp}(\mathbf{x})\in\mathcal{I}.
\]
To the best of our knowledge, the combination of arbitrary positive element costs, a $k$-submodular threshold constraint, and a general matroid constraint has not been systematically studied. This leads to the central question of this paper: can one design a deterministic bicriteria approximation algorithm for $k$SC-M with provable guarantees and polynomial oracle complexity? We answer this question affirmatively through a matroid-aware extension of the greedy framework.

\noindent\textbf{Our contributions.}
To address the above research gap, this paper studies $k$SC-M.
In this problem, the goal is to minimize the total assignment cost under
two coupled constraints: the $k$-submodular utility must reach a prescribed
threshold $T$, and the support of the selected $k$-assignment must belong
to the independence family of a given matroid $(V,\mathcal{I})$.
Our main contributions are summarized as follows.

First,  we propose a deterministic bicriteria approximation algorithm,
called the Matroid Extended Greedy algorithm (MEG), for both monotone and
non-monotone $k$-submodular objectives. At each iteration, MEG algorithm selects
admissible element-label pairs while preserving matroid independence,
thereby integrating the coverage requirement, the multi-category assignment
structure, and the matroid feasibility constraint into a unified greedy
framework.

Second, we prove that MEG algorithm has polynomial oracle query complexity.
Specifically, the algorithm uses $O\bigl(n^{2}(MO+kEO)\bigr)$
oracle queries, where $MO$ and $EO$ denote the matroid independence oracle
and the $k$-submodular function evaluation oracle, respectively. This
polynomial bound avoids the unbounded query-complexity dependence appearing
in some previous streaming algorithms for $k$SC.

Third, we establish logarithmic bicriteria approximation guarantees for
MEG algorithm. More precisely, the algorithm achieves
$\bigl(\frac{L}{L+1}\bigl(\ln(\frac1\epsilon)+\frac{1+L\epsilon}{L\epsilon}\bigr),\frac{L}{L+1}(1-\epsilon)\bigr)$ for monotone functions, and $\bigl(\frac{L}{L+2}\bigl(\ln(\frac1\epsilon)+\frac{2+L\epsilon}{L\epsilon}\bigr),\frac{L}{L+2}(1-\epsilon)\bigr)$ for non-monotone functions, where $L=\min_{i\neq j}\frac{c(e^j)}{c(e^i)}$. Compared with prior streaming algorithms for
$k$SC, whose cost approximation ratios may grow linearly with
$1/\epsilon$ and whose query complexity may depend on unbounded instance
parameters, MEG algorithm provides improved logarithmic cost dependence while
maintaining polynomial oracle complexity under general matroid feasibility
constraints.

\noindent\textbf{Organization.}
The remainder of this paper is organized as follows. Section~2 introduces
the notation and preliminaries, including the basic properties of
$k$-submodular functions, the standard definition of matroids, the formal
formulation of the $k$SC-M problem, and the notion of bicriteria
approximation. Section~3 presents the MEG algorithm and provides a detailed analysis of its oracle complexity and
bicriteria approximation guarantees. Section~4 further investigates the
matroid-constrained $k$-submodular cover problem in the unit-cost setting
and derives the corresponding theoretical results. Finally, Section~5
concludes the paper and discusses possible future directions for
matroid-constrained $k$-submodular optimization.

\section{Preliminaries}\label{pre}
In this section, we introduce the basic notions and properties of 
$k$-submodular functions, which will be used later in the study of the 
$k$-submodular cover problem with a matroid constraints.

We begin with the notion of a $k$-set. Let 
$V=\{e_1,e_2,\ldots,e_n\}$ be a ground set, and let $k$ be a positive integer. 
Denote $[k]=\{1,2,\ldots,k\}$. A $k$-set is represented by a tuple
$(V_1,V_2,\ldots,V_k)\in (k+1)^V$, 
where $V_i\subseteq V$ for each $i\in [k]$, and 
$V_i\cap V_j=\emptyset,\quad \forall i\neq j$. 
That is, the sets $V_1,\ldots,V_k$ are pairwise disjoint subsets of $V$.

For a $k$-set 
$\boldsymbol{x}=(X_1,X_2,\ldots,X_k)\in (k+1)^V$, 
we define its support set as
$\operatorname{supp}(\mathbf{x})=\bigcup_{i\in [k]}X_i$. 
If all subsets are empty, we denote the corresponding $k$-set by $\mathbf{0}=(\emptyset,\ldots,\emptyset)$. 
We next introduce the position of an element in a $k$-set. If 
$e\in X_i \in\operatorname{supp}(\mathbf{x})$, then we write
$\mathbf{x}(e)=i$, where $i$ is called the position of $e$ in $\mathbf{x}$. If 
$e\notin \operatorname{supp}(\mathbf{x})$, then we write $\mathbf{x}(e)=0$. 
Adding an element $e\notin \operatorname{supp}(\mathbf{x})$ to the 
$i$-th set $X_i$ is denoted by $\mathbf{x}\sqcup(e,i)$. 
Equivalently, if 
$\operatorname{supp}(\mathbf{x})=\{e_1,e_2,\ldots,e_t\}$ and 
$i_j=\mathbf{x}(e_j)$ for each $j=1,\ldots,t$, then we may write
\[
\mathbf{x}=\{(e_1,i_1),(e_2,i_2),\ldots,(e_t,i_t)\}.
\]
In particular, when $X_i=\{e\}$ and $X_j=\emptyset$ for all $j\neq i$, 
the $k$-set $\mathbf{x}$ is simply denoted by $(e,i)$.

We now define $k$-submodularity. A function  $g:(k+1)^V\rightarrow \mathbb{R}_+$
is called $k$-submodular if, for any 
\[
\mathbf{x}=(X_1,X_2,\ldots,X_k), \quad
\mathbf{y}=(Y_1,Y_2,\ldots,Y_k)\in (k+1)^V,
\]
the following inequality holds:
\[
g(\mathbf{x})+g(\mathbf{y})
\geq
g(\mathbf{x}\sqcap \mathbf{y})
+
g(\mathbf{x}\sqcup \mathbf{y}),
\]
where
\[
\mathbf{x}\sqcap \mathbf{y}
=
(X_1\cap Y_1,\ldots,X_k\cap Y_k),
\]
and
\[
\mathbf{x}\sqcup \mathbf{y}
=
(Z_1,\ldots,Z_k),
\]
with
\[
Z_i=
(X_i\cup Y_i)
\setminus
\bigcup_{j\neq i}(X_j\cup Y_j).
\]

For two $k$-sets $\mathbf{x}=(X_1,X_2,\ldots,X_k)$ and  $\mathbf{y}=(Y_1,Y_2,\ldots,Y_k)$, 
we write $\mathbf{x}\preceq \mathbf{y}$ if $X_i\subseteq Y_i, \quad \forall i\in [k]$. 
Throughout this paper, we assume that $g$ is non-negative, i.e., $g(\mathbf{x}) \ge 0$ for all $\mathbf{x} \in (k+1)^V$ and normalized, i.e., $g(\mathbf{0}) = 0$. 

We also recall several useful properties of $k$-set functions.

\textbf{Monotone.} 
The function $g$ is monotone if for any 
$\mathbf{x}\in (k+1)^V$, any element 
$e\notin \operatorname{supp}(\mathbf{x})$, and any $i\in [k]$, 
the marginal gain of adding $e$ to the $i$-th set is nonnegative, namely,
\[
\Delta_{(e,i)}g(\mathbf{x})
=
g(X_1,\ldots,X_{i-1},X_i\cup\{e\},X_{i+1},\ldots,X_k)
-
g(\mathbf{x})
\geq 0.
\]

\textbf{Orthant submodularity.}
The function $g$ is orthant submodular if for any 
$\mathbf{x},\mathbf{y}\in (k+1)^V$ with 
$\mathbf{x}\preceq \mathbf{y}$, any 
$e\notin \operatorname{supp}(\mathbf{y})$, and any $i\in [k]$, 
we have
\[
\Delta_{(e,i)}g(\mathbf{x})
\geq
\Delta_{(e,i)}g(\mathbf{y}).
\]

\textbf{Pairwise monotonicity.}
The function $g$ is pairwise monotone if for any 
$\mathbf{x}\in (k+1)^V$, any 
$e\notin \operatorname{supp}(\mathbf{x})$, and any distinct 
$i,j\in [k]$, we have
\[
\Delta_{(e,i)}g(\mathbf{x})
+
\Delta_{(e,j)}g(\mathbf{x})
\geq 0.
\]

\textbf{Theorem 1.}~\cite{ward2014maximizing}
A function \(f:(k+1)^E \to \mathbb{R}\) is \(k\)-submodular if and only if
\(f\) is orthant submodular and pairwise monotone.

\medskip

\textbf{Matroid.}
Let \(\mathcal I \subseteq 2^V\). The pair \((V,\mathcal I)\) is called a matroid if the following three axioms hold:
\begin{itemize}
	\item[(M1)] \(\emptyset \in \mathcal I\);
	\item[(M2)] if \(A \subseteq B\) and \(B \in \mathcal I\), then \(A \in \mathcal I\);
	\item[(M3)] if \(A,B \in \mathcal I\) and \(|A|<|B|\), then there exists an element
	\(e \in B\setminus A\) such that \(A\cup\{e\}\in \mathcal I\).
\end{itemize}
The sets in \(\mathcal I\) are called independent sets. An independent set
\(A\in \mathcal I\) is said to be maximal if there is no independent set
\(B\in \mathcal I\) such that \(A\subsetneq B\).

\medskip

\textbf{Lemma 1.}
Given a matroid \((V,\mathcal I)\), and let \(A,B\in\mathcal I\) be two
independent sets. Suppose that \(B\setminus A\neq \emptyset\). Then, for any
\(a\in A\setminus B\), there exists an element \(b\in B\setminus A\) such that
\[
(B\setminus\{b\})\cup\{a\}\in\mathcal I .
\]

\textbf{Proof.}
Fix an arbitrary element \(a\in A\setminus B\). We distinguish two cases.

\textbf{Case 1.} Suppose that $B \cup \{a\} \in \mathcal{I}$. 
There exists an element $b \in B\setminus A$, we have
\[
(B\setminus\{b\})\cup\{a\}\subseteq B\cup\{a\}.
\]
Since $B\cup\{a\}\in\mathcal{I}$, it follows from the hereditary property $(\mathrm{M}2)$ of the matroid that
\[
(B\setminus\{b\})\cup\{a\}\in\mathcal{I}.
\]
for any \(b\in B\setminus A\). Hence the desired element \(b\) can be chosen
arbitrarily from \(B\setminus A\).

\textbf{Case 2.} Suppose that \(B\cup\{a\}\notin\mathcal I\). Since \(B\) is
independent, the dependence of \(B\cup\{a\}\) must be caused by adding the new
element \(a\). Therefore, \(B\cup\{a\}\) contains a circuit \(C\) with
\(a\in C\). Moreover, \(C\setminus\{a\}\subseteq B\).

We claim that \(C\) must contain at least one element from \(B\setminus A\).
Indeed, if this were not the case, then
\[
C \subseteq (A\cap B)\cup\{a\}\subseteq A,
\]
which contradicts the fact that \(A\in\mathcal I\), since a circuit is a
dependent set. Hence there exists some \(b\in C\cap (B\setminus A)\).

It remains to show that replacing this \(b\) by \(a\) preserves independence.
Suppose, for contradiction, that
\[
(B\setminus\{b\})\cup\{a\}\notin\mathcal I.
\]
Then this set contains a circuit \(C'\). Since \(B\setminus\{b\}\subseteq B\)
is independent, the circuit \(C'\) must contain \(a\). Thus \(C'\) is also a
circuit contained in \(B\cup\{a\}\). By the uniqueness of the circuit contained
in \(B\cup\{a\}\) when \(B\) is independent, we must have \(C'=C\). However,
this is impossible because \(b\in C\), while
\[
b\notin (B\setminus\{b\})\cup\{a\}.
\]
This contradiction shows that
\[
(B\setminus\{b\})\cup\{a\}\in\mathcal I.
\]\(\square\)

\textbf{$k$-Submodular Cover Problem with a Matroid Constraint ($k$SC-M).}
Let \(g:(k+1)^V\to \mathbb{R}_+\) be a \(k\)-submodular function defined on the
ground set \(V\). We assume that each element \(e\in V\) is associated with a
positive cost \(c(e)>0\). For a \(k\)-set \(\mathbf{x}\in (k+1)^V\), let
\(\operatorname{supp}(\mathbf{x})\) denote the support set of \(\mathbf{x}\).
The total cost of \(\mathbf{x}\) is defined as
\[
c(\mathbf{x})=\sum_{e\in \operatorname{supp}(\mathbf{x})} c(e).
\]
Given a positive threshold \(T\leq \max_{\mathbf{x}\in (k+1)^V} g(\mathbf{x})\)
and a matroid \((V,\mathcal I)\), the \(k\)SC-M problem aims to find a
\(k\)-set \(\mathbf{x}\) with minimum total cost such that two requirements
are satisfied. First, the value of the objective function should reach the
prescribed threshold \(T\). Second, the support of the selected \(k\)-set should
be an independent set of the matroid. Formally, the problem can be written as

\[
\left\{
\begin{aligned}
	& \min_{\mathbf{x} \in (k+1)^V} c(\mathbf{x}) \\
	& \text{s.t.} \quad g(\mathbf{x}) \geq T \\
	& \operatorname{supp}(\mathbf{x}) \in \mathcal{I}. 
\end{aligned}
\right.
\]

In this paper, we focus on the case \(k\geq 2\). When \(k=1\), the notion of
\(k\)-submodularity reduces to the standard notion of submodularity, and the
corresponding problem becomes a submodular cover problem. We use
\((V,g,T,\mathcal I)\) to denote an instance of the \(k\)SC-M problem. When the
instance is feasible, let \(\mathbf{o}\) be an optimal solution, and denote
by $\mathrm{opt}=c(\mathbf{o})$
the optimal cost.

\textbf{Bicriteria approximation algorithm.}
We next introduce the approximation guarantee used in this work. An algorithm
is called an \((\beta_1, \beta_2)\)-bicriteria approximation algorithm
for the \(k\)SC-M problem if it runs in polynomial time and returns a solution
\(\mathbf{s}\in (k+1)^V\) such that
$\operatorname{supp}(\mathbf{s})\in \mathcal{I}, c(\mathbf{s})\leq \beta_1 \cdot \mathrm{opt}, g(\mathbf{s})\geq \beta_2 \cdot T$, where \(\beta_1,\beta_2 >0\) are approximation parameters.

\section{$k$-Submodular Cover Problem with a Matroid Constraint}

\subsection{MEG algorithm and its complexity analysis}

\begin{algorithm}[t]
	\caption{ Matroid Extended Greedy algorithm (MEG)}
	\begin{algorithmic}[1]
		\REQUIRE An instance $(V, g, k, T)$, $\epsilon > 0$, a matroid $(V, \mathcal{I})$
		\ENSURE A solution $\mathbf{s}$ with $\mathrm{supp}(\mathbf{s}) \in \mathcal{I}$.
		\STATE $\mathbf{s} \leftarrow \mathbf{0}$,
		
		\STATE Define: $f: (k+1)^V \to \mathbb{R}_+$, $f(\cdot) \leftarrow \min\{g(\cdot), T\}$
		
		\WHILE{$f(\mathbf{s}) < \alpha(1-\epsilon)T$}
		\FORALL{$e \in V \setminus \mathrm{supp}(\mathbf{s})$ with $\mathrm{supp}(\mathbf{s}) \cup \{e\} \in \mathcal{I}$}
		\STATE $(e,i) \leftarrow \arg\max_{i \in [k]} \dfrac{\Delta_{(e,i)} f(\mathbf{s})}{c(e)}$
		\STATE $\mathbf{s} \leftarrow \mathbf{s} \sqcup (e,i)$
		\ENDFOR
		\ENDWHILE
		
		\RETURN $\mathbf{s}$
	\end{algorithmic}
\end{algorithm}

We now present a MEG algorithm for the $k$SC-M problem. The algorithm
takes as input an instance \((V,g,k,T)\), an accuracy parameter
\(\epsilon\in(0,1)\), and a matroid \((V,\mathcal I)\). For each element
\(e\in V\), \(c(e)>0\) denote is cost. We define
\[
L=\min_{i\neq j}\frac{c(e^j)}{c(e^i)}.
\]
In Algorithm~1, we use the following parameter:
\[
\alpha =
\begin{cases}
	\dfrac{L}{L+1}, & \text{if } g \text{ is monotone}, \\[2mm]
	\dfrac{L}{L+2}, & \text{if } g \text{ is non-monotone}.
\end{cases}
\]
This parameter determines the fraction of the threshold that the algorithm aims to achieve. we introduce a truncated
function \(f:(k+1)^V\to \mathbb{R}_+\), defined by
$
f(\cdot)=\min\{g(\cdot), T\}.
$
The truncation ensures that the function value is capped at \(\alpha T\).
Moreover, the function \(f\) inherits the $k$-submodularity of $g$, which
allows us to analyze the greedy process using the marginal gains of $f$.

The algorithm starts from the empty solution $\mathbf{s}=\mathbf{0}$.
At each iteration, it considers all elements that can be added without
violating the matroid constraint. More precisely, an element $e$ is feasible
with respect to the current solution \(\mathbf{s}\) if $\operatorname{supp}(\mathbf{s})\cup\{e\}\in\mathcal I$. 
Among all feasible elements and all possible labels \(i\in[k]\), the algorithm
chooses a tuple \((e,i)\) that maximizes the marginal gain per unit cost, namely
\[
\frac{\Delta_{(e,i)} f(\mathbf{s})}{c(e)} .
\]
Then the selected tuple is added to the current solution. This greedy step is
repeated until the truncated function value reaches
\(\alpha(1-\epsilon)T\). Finally, the algorithm returns a solution
\(\mathbf{s}\) whose support remains independent in the matroid.

\subsection{Proof of Main Result}
\label{Proof of main result}

Next, we analyze the performance of MEG algorithm. Suppose that the algorithm terminates after $t$ iterations. Then $|\operatorname{supp}(\mathbf{s})|=t$.
For each $j\in[t]$, let $(e^j,i^j)$ denote the element-position pair selected
by the MEG algorithm in the $j$-th iteration, and let $\mathbf{s}^j$ be the
solution obtained after the first $j$ iterations, and $\mathbf{s}$ denotes the solution obtained after the algorithm terminates. In particular, we set
$\mathbf{s}^0=\mathbf{0}$ and $\mathbf{s}^t=\mathbf{s}$. Number $\mathbf{o}$ is an
optimal solution and define $\mathbf{o}^0:=\mathbf{o}$.

For the purpose of the subsequent comparison between the MEG solution and
the optimal solution, we introduce an intermediate solution $\mathbf{s}^{j-1/2}$.
Specifically, for each $j\in[t]$, define
\[
\mathbf{s}^{j-1/2}
:=
\mathbf{s}^{j-1}\sqcup (e^j,\mathbf{o}(e^j)).
\]
That is, $\mathbf{s}^{j-1/2}$ is obtained from $\mathbf{s}^{j-1}$ by assigning
the element $e^j$ to the same position as it has in the optimal solution
$\mathbf{o}$.

We now describe how to construct $\mathbf{o}^j$ from $\mathbf{o}^{j-1}$.
The construction gradually transforms the optimal solution $\mathbf{o}$ into a
solution that is consistent with the choices made by the MEG algorithm,
while preserving feasibility with respect to the matroid constraint.

First, consider the case where $e^j\in \operatorname{supp}(\mathbf{o})$.
If the position chosen by the MEG algorithm is the same as that in the
optimal solution, namely $i^j=\mathbf{o}(e^j)$, then no modification is needed,
and we simply set
\[
\mathbf{o}^{j-1/2} := \mathbf{o}^{j-1},
\qquad
\mathbf{o}^{j} := \mathbf{o}^{j-1}.
\]
Otherwise, if $i^j\neq \mathbf{o}(e^j)$, then the support of the solution does
not need to be changed, but the position of $e^j$ must be updated. In this case,
we first remove $e^j$ from its current position in $\mathbf{o}^{j-1}$ to obtain
$\mathbf{o}^{j-1/2}$, and then assign $e^j$ to position $i^j$ to obtain
$\mathbf{o}^{j}$. Thus,
\[
\mathbf{o}^{j-1/2}
:=
\mathbf{o}^{j-1}\setminus (e^j,\mathbf{o}(e^j)),
\qquad
\mathbf{o}^{j}
:=
\mathbf{o}^{j-1/2}\sqcup (e^j,i^j).
\]

It remains to consider the case where
$e^j\notin \operatorname{supp}(\mathbf{o})$. In this situation, adding $e^j$
to $\mathbf{o}^{j-1}$ may violate the matroid constraint, and hence an exchange
operation may be required. We distinguish two subcases.

If
\[
\operatorname{supp}(\mathbf{o}^{j-1})
\setminus
\operatorname{supp}(\mathbf{s}^{j})
\neq \emptyset,
\]
then, by the matroid exchange property stated in Lemma~1, there exists an
element
\[
e'\in
\operatorname{supp}(\mathbf{o}^{j-1})
\setminus
\operatorname{supp}(\mathbf{s}^{j})
\]
such that
\[
\bigl(\operatorname{supp}(\mathbf{o}^{j-1})\setminus\{e'\}\bigr)
\cup
\{e^j\}
\in \mathcal{I}.
\]
Therefore, we can remove $e'$ from $\mathbf{o}^{j-1}$ and then insert $e^j$
with position $i^j$.

On the other hand, if
\[
\operatorname{supp}(\mathbf{o}^{j-1})
\setminus
\operatorname{supp}(\mathbf{s}^{j})
=
\emptyset,
\]
then
\[
\operatorname{supp}(\mathbf{o}^{j-1})
\subseteq
\operatorname{supp}(\mathbf{s}^{j}).
\]
we have $\operatorname{supp}(\mathbf{o}^{j-1})\cup\{e^j\}
=
\operatorname{supp}(\mathbf{s}^{j})
\in \mathcal{I}$. 

Hence no exchange operation is required in this subcase, and we may directly
add $e^j$ to $\mathbf{o}^{j-1}$. For notational consistency, we set
$e'=e^j$ in this case, and the deletion step is understood to be void.

Consequently, when $e^j\notin \operatorname{supp}(\mathbf{o})$, we define
\[
\mathbf{o}^{j-1/2}
=
\mathbf{o}^{j-1}\setminus(e',\mathbf{o}^{j-1}(e')),
\]
and then define
\[
\mathbf{o}^{j}
=
\mathbf{o}^{j-1/2}\sqcup(e^j,i^j).
\]
By construction, $\mathbf{o}^j$ is feasible and satisfies
\[
\operatorname{supp}(\mathbf{o}^{j})
=
\bigl(\operatorname{supp}(\mathbf{o}^{j-1})\setminus\{e'\}\bigr)
\cup
\{e^j\}
\in \mathcal{I}.
\]

In summary, the construction of $\mathbf{o}^{j-1/2}$ and $\mathbf{o}^{j}$ can
be written as
\[
\mathbf{o}^{j-1/2}
=
\begin{cases}
	\mathbf{o}^{j-1}, &\textnormal{if } e^j\in \operatorname{supp}(\mathbf{o}),\ i^j=\mathbf{o}(e^j),\\[2mm]
	\mathbf{o}^{j-1}\setminus(e^j,\mathbf{o}(e^j)), &\textnormal{if } e^j\in \operatorname{supp}(\mathbf{o}),\ i^j\neq \mathbf{o}(e^j),\\[2mm]
	\mathbf{o}^{j-1}\setminus(e',\mathbf{o}^{j-1}(e')), & \text{otherwise},
\end{cases}
\]
and
\[
\mathbf{o}^{j}
=
\begin{cases}
	\mathbf{o}^{j-1/2}, & \textnormal{if } e^j\in \operatorname{supp}(\mathbf{o}),\ i^j=\mathbf{o}(e^j),\\[2mm]
	\mathbf{o}^{j-1/2}\sqcup(e^j,i^j), & \text{otherwise}.
\end{cases}
\]

\noindent \textbf{Lemma 2.} 
After iteration $j \leq t$ in Algorithm 1, we have
\begin{itemize}
	\item[1)] If $f$ is monotone, $f(\mathbf{o}) - f(\mathbf{o}^j) \leq \frac{1}{L} f(\mathbf{s}^j)$.
	\item[2)] If $f$ is non-monotone, $f(\mathbf{o}) - f(\mathbf{o}^j) \leq \frac{2}{L} f(\mathbf{s}^j)$.
\end{itemize}

\begin{proof}
	\textbf{1).} In this case, \(f\) is assumed to be monotone. The proof proceeds by analyzing the decrease of the auxiliary solutions
	\(\mathbf{o}^{l}\), for \(0\leq l\leq j\), during the iterations.
	According to the construction of \(\mathbf{o}^{l}\), we distinguish two cases
	depending on whether the element \(e^{l}\) selected by the greedy algorithm
	belongs to the support of the optimal solution.

	\textbf{Case 1.}
	Suppose that $e^{l}\in\operatorname{supp}(\mathbf{o})$, we have
	\begin{align}
		f(\mathbf{o}^{l-1})-f(\mathbf{o}^{l})
		&\le
		f(\mathbf{o}^{l-1})-f(\mathbf{o}^{l-1/2})
		\label{eq:monotonicity1}\\
		&\le
		f(\mathbf{s}^{l-1/2})-f(\mathbf{s}^{l-1})
		\label{eq:$k$-submodularity1}\\
		&\le
		\frac{c(e^{l})}{c(e^{l})}
		\left(
		f(\mathbf{s}^{l})-f(\mathbf{s}^{l-1})
		\right)
		\label{eq:selection rule1}\\
		&\le
		\frac1L
		\left(
		f(\mathbf{s}^{l})-f(\mathbf{s}^{l-1})
		\right)\notag.
	\end{align}
	
	Here, inequality (1) follows from the monotonicity of $f$,
	inequality (2) follows from the $k$-submodularity of $f$,
	and inequality (3) follows from the greedy selection rule.
	The last inequality holds because $L\le 1$.
	
	\medskip
	
	\textbf{Case 2.}
	Suppose that $e^{l}\notin\operatorname{supp}(\mathbf{o})$.
	Using exactly the same argument together with the exchange operation introduced
	in the construction of $\mathbf{o}^{l}$, we obtain
	\begin{align}
		f(\mathbf{o}^{l-1})-f(\mathbf{o}^{l})
		&\le
		f(\mathbf{o}^{l-1})-f(\mathbf{o}^{l-1/2})
		\label{eq:monotonicity2}\\
		&\le
		f(\mathbf{s}^{l-1/2})-f(\mathbf{s}^{l-1})
		\label{eq:$k$-submodularity2}\\
		&\le
		\frac{c(e')}{c(e^{l})}
		\left(
		f(\mathbf{s}^{l})-f(\mathbf{s}^{l-1})
		\right)
		\label{eq:selection rule2}\\
		&\le
		\frac1L
		\left(
		f(\mathbf{s}^{l})-f(\mathbf{s}^{l-1})
		\right)\notag,
	\end{align}
	where inequality (4) follows from the monotonicity of $f$,
	inequality (5) follows from the $k$-submodularity of $f$,
	and inequality (6) follows from the greedy selection rule.
	
	Combining the above two cases, the decrease of the auxiliary solution in every
	iteration is bounded by the gain of the greedy solution. Therefore,
	\[
	\begin{aligned}
		f(\mathbf{o})-f(\mathbf{o}^{j})
		&=
		\sum_{l=1}^{j}
		\left(
		f(\mathbf{o}^{l-1})-f(\mathbf{o}^{l})
		\right)\\
		&\le
		\frac1L
		\sum_{l=1}^{j}
		\left(
		f(\mathbf{s}^{l})-f(\mathbf{s}^{l-1})
		\right)\\
		&=
		\frac1L
		f(\mathbf{s}^{j}),
	\end{aligned}
	\]
	which proves Part 1).
	
	\textbf{2).} In this case, $f$ is non-monotone.
	Unlike the monotone case, the first inequality used above no longer holds.
	Therefore, we need a more careful comparison between
	$\mathbf{o}^{l-1}$ and $\mathbf{o}^{l}$.
	Again, we distinguish two cases.
	
	\textbf{Case 1.}
	Suppose that $e^{l}\in\operatorname{supp}(\mathbf{o})$, then $\mathrm{supp}(\mathbf{o}^{l-1}) = \mathrm{supp}(\mathbf{o}^l)$.
	
	If $\mathbf{o}^{l-1}(e^{l})=\mathbf{o}^{l}(e^{l})$, 
	then $f(\mathbf{o}^{l-1})-f(\mathbf{o}^{l})=0$, 
	and hence
	\[
	f(\mathbf{o}^{l-1})-f(\mathbf{o}^{l})
	\le
	f(\mathbf{s}^{l})-f(\mathbf{s}^{l-1}).
	\]
	
	Otherwise,
	$\mathbf{o}^{l-1}(e^{l})\neq\mathbf{o}^{l}(e^{l})$.
	By the pairwise monotonicity of $f$, we have
	\[
	f(\mathbf{o}^{l})
	+
	f(\mathbf{o}^{l-1})
	-
	2f(\mathbf{o}^{l-1/2})
	=
	\Delta_{(e^{l},\mathbf{o}^{l-1}(e^{l}))}
	f(\mathbf{o}^{l-1/2})
	+
	\Delta_{(e^{l},\mathbf{o}^{l}(e^{l}))}
	f(\mathbf{o}^{l-1/2})
	\ge0.
	\]
	Consequently,
	\[
	\begin{aligned}
		f(\mathbf{o}^{l-1})-f(\mathbf{o}^{l})
		&=
		2\left(
		f(\mathbf{o}^{l-1})-f(\mathbf{o}^{l-1/2})
		\right)
		-
		\left(
		f(\mathbf{o}^{l})
		+
		f(\mathbf{o}^{l-1})
		-
		2f(\mathbf{o}^{l-1/2})
		\right)\\
		&\le
		2\left(
		f(\mathbf{o}^{l-1})-f(\mathbf{o}^{l-1/2})
		\right)\\
		&\le
		2\left(
		f(\mathbf{s}^{l-1/2})-f(\mathbf{s}^{l-1})
		\right)\\
		&\le
		\frac{2c(e^{l})}{c(e^{l})}
		\left(
		f(\mathbf{s}^{l})-f(\mathbf{s}^{l-1})
		\right)\\
		&\le
		\frac2L
		\left(
		f(\mathbf{s}^{l})-f(\mathbf{s}^{l-1})
		\right).
	\end{aligned}
	\]
	
	\medskip
	
	\textbf{Case 2.}
	Suppose that $e^{l}\notin\operatorname{supp}(\mathbf{o})$.
	Using the definition of the intermediate solution together with the same
	arguments as above, we obtain
	\[
	\begin{aligned}
		f(\mathbf{o}^{l-1})-f(\mathbf{o}^{l})
		&=
		f(\mathbf{o}^{l-1})-f(\mathbf{o}^{l-1/2})
		+
		f(\mathbf{o}^{l-1/2})-f(\mathbf{o}^{l})\\
		&\le
		f(\mathbf{o}^{l-1})-f(\mathbf{o}^{l-1/2})\\
		&\le
		f(\mathbf{s}^{l-1/2})-f(\mathbf{s}^{l-1})\\
		&\le
		\frac{c(e')}{c(e^{l})}
		\left(
		f(\mathbf{s}^{l})-f(\mathbf{s}^{l-1})
		\right)\\
		&=
		\frac1L
		\left(
		f(\mathbf{s}^{l})-f(\mathbf{s}^{l-1})
		\right)\\
		&\le
		\frac2L
		\left(
		f(\mathbf{s}^{l})-f(\mathbf{s}^{l-1})
		\right).
	\end{aligned}
	\]
	
	Combining the above two cases yields
	\[
	\begin{aligned}
		f(\mathbf{o})-f(\mathbf{o}^{j})
		&=
		\sum_{l=1}^{j}
		\left(
		f(\mathbf{o}^{l-1})-f(\mathbf{o}^{l})
		\right)\\
		&\le
		\frac2L
		\sum_{l=1}^{j}
		\left(
		f(\mathbf{s}^{l})-f(\mathbf{s}^{l-1})
		\right)\\
		&=
		\frac2L
		f(\mathbf{s}^{j}),
	\end{aligned}
	\]
\end{proof}

\noindent \textbf{Theorem 2.} 
For $\epsilon \in (0,1)$, the Algorithm 1 takes $O(n^2(MO+kEO))$ query complexity and returns a bicriteria approximation ratio, 
\begin{itemize}
	\item If $g$ is monotone, Algorithm 1 returns a $\bigl(\alpha\bigl(\ln(\frac1\epsilon)+\frac{1+L\epsilon}{L\epsilon}\bigr),\alpha(1-\epsilon)\bigr)$-bicriteria approximation ratio.
	\item If $g$ is non-monotone, Algorithm 1 returns a $\bigl(\alpha\bigl(\ln(\frac1\epsilon)+\frac{2+L\epsilon}{L\epsilon}\bigr),\alpha(1-\epsilon)\bigr)$-bicriteria approximation ratio.
\end{itemize}

where
\begin{equation}
	\alpha =
	\begin{cases}
		\dfrac{L}{1+L}, & \text{if } g \text{ is monotone}, \\[6pt]
		\dfrac{L}{2+L}, & \text{if } g \text{ is non-monotone}.
	\end{cases}
\end{equation}

\begin{proof}
	We first analyze the query complexity of Algorithm~1. Let
	$\mathrm{EO}$ denote the cost of one evaluation-oracle query for the
	$k$-submodular function $f$, and let $\mathrm{MO}$ denote the cost of
	one membership-oracle query for the matroid $(V,\mathcal{I})$. Let $r$
	be the rank of the matroid. Since exactly one new element is added to
	$\operatorname{supp}(\mathbf{s})$ in each iteration and
	$\operatorname{supp}(\mathbf{s})\in\mathcal{I}$ is maintained
	throughout the algorithm, the while-loop is executed at most
	$r\leq n$ times.
	In each iteration, at most $n$ candidate elements are examined.
	Testing whether $\operatorname{supp}(\mathbf{s})\cup\{e\}\in\mathcal{I}$
	requires one membership-oracle query for each candidate element. For
	every feasible element $e$, finding its best label requires evaluating
	at most $k$ marginal gains. Assuming that $f(\mathbf{s})$ is stored
	and reused, this requires at most $k$ evaluation-oracle queries. Hence,
	the query complexity of each iteration is $O\bigl(n\mathrm{MO}+nk\mathrm{EO}\bigr)$. 
	Moreover, each greedy augmentation adds one new element to 
	$\operatorname{supp}(\mathbf{s})$. Since the support must remain independent 
	throughout the algorithm, its cardinality cannot exceed the rank $r$ of the 
	matroid. Hence, the algorithm performs at most $r$ greedy augmentations. 
	Combining this bound with the oracle cost of each iteration, the total oracle 
	cost of Algorithm~1 is $O\bigl(nr\mathrm{MO}+knr\mathrm{EO}\bigr)$. 
	Since $r\leq n$, the worst-case oracle cost can be further bounded by $O\bigl(n^{2}(\mathrm{MO}+k\mathrm{EO})\bigr)$. 
	
	We next prove the bicriteria approximation guarantee. 
	We first consider the case where $f$ is monotone. By Lemma~2, we have
	\begin{align}
		T-f(\mathbf{s}^t)
		&\le f(\mathbf{o})-f(\mathbf{s}^t) \notag\\
		&= f(\mathbf{o})-f(\mathbf{o}^t)+f(\mathbf{o}^t)-f(\mathbf{s}^t) \notag\\
		&\le \frac{1}{L}f(\mathbf{s}^t)
		+ \sum_{e\in \operatorname{supp}(\mathbf{o}^t)\setminus \operatorname{supp}(\mathbf{s}^t)}
		\Delta_{(e,\mathbf{o}(e))}f(\mathbf{s}^t) \notag\\
		&\le \frac{1}{L}f(\mathbf{s}^t)
		+ \mathrm{opt}\,
		\frac{
			\sum_{e\in \operatorname{supp}(\mathbf{o}^t)\setminus \operatorname{supp}(\mathbf{s}^t)}
			\Delta_{(e,\mathbf{o}(e))}f(\mathbf{s}^t)}
		{
			\sum_{e\in \operatorname{supp}(\mathbf{o}^t)\setminus \operatorname{supp}(\mathbf{s}^t)}
			c(e)}
		\label{eq:mono-ratio}\\
		&\le \frac{1}{L}f(\mathbf{s}^t)
		+ \mathrm{opt}
		\max_{e\in \operatorname{supp}(\mathbf{o}^t)\setminus \operatorname{supp}(\mathbf{s}^t)}
		\frac{\Delta_{(e,\mathbf{o}(e))}f(\mathbf{s}^t)}{c(e)}
		\label{eq:mono-max}\\
		&\le \frac{1}{L}f(\mathbf{s}^t)
		+ \mathrm{opt}\,
		\frac{f(\mathbf{s}^{t+1})-f(\mathbf{s}^t)}{c(e^{t+1})}.
		\label{eq:mono-greedy}
	\end{align}
	Here, the transition from \eqref{eq:mono-ratio} to \eqref{eq:mono-max} follows
	from the elementary weighted-average inequality
	\[
	\min_{i=1,\ldots,l}\frac{a_i}{b_i}
	\le
	\frac{\sum_{i=1}^l a_i}{\sum_{i=1}^l b_i}
	\le
	\max_{i=1,\ldots,l}\frac{a_i}{b_i},
	\]
	which holds for any positive numbers $a_1,\ldots,a_l$ and
	$b_1,\ldots,b_l$. The last inequality \eqref{eq:mono-greedy} follows from the
	greedy selection rule of Algorithm~1.
	
	Replacing $t$ by $t-1$ in \eqref{eq:mono-greedy} and rearranging terms, we get
	\[
	T-\frac{1+L}{L}f(\mathbf{s}^{t-1})
	\le
	\mathrm{opt}\,
	\frac{f(\mathbf{s}^t)-f(\mathbf{s}^{t-1})}{c(e^t)}.
	\]
	Equivalently,
	\[
	f(\mathbf{s}^t)-f(\mathbf{s}^{t-1})
	\ge
	\frac{c(e^t)}{\mathrm{opt}}
	\left(
	T-\frac{1+L}{L}f(\mathbf{s}^{t-1})
	\right).
	\]
	Therefore,
	\begin{align}
		T-\frac{1+L}{L}f(\mathbf{s}^t)
		&=
		T-\frac{1+L}{L}f(\mathbf{s}^{t-1})
		-\frac{1+L}{L}\bigl(f(\mathbf{s}^t)-f(\mathbf{s}^{t-1})\bigr) \notag\\
		&\le
		\left(
		1-\frac{(1+L)c(e^t)}{L\mathrm{opt}}
		\right)
		\left(
		T-\frac{1+L}{L}f(\mathbf{s}^{t-1})
		\right) \notag\\
		&\le
		\exp\left(-\frac{(1+L)c(e^t)}{L\mathrm{opt}}\right)
		\left(
		T-\frac{1+L}{L}f(\mathbf{s}^{t-1})
		\right),
		\label{eq:mono-recursion}
	\end{align}
	where the last inequality uses $1-x\le e^{-x}$. Applying
	\eqref{eq:mono-recursion} recursively from the first iteration to iteration
	$t$, and using $f(\mathbf{s}^0)=0$, we obtain
	\[
	T-\frac{1+L}{L}f(\mathbf{s}^t)
	\le
	\exp\left(-\frac{(1+L)c(\mathbf{s}^t)}{L\mathrm{opt}}\right)T.
	\]
	
	Let $t$ be the last iteration of the while-loop. By the termination condition,
	\[
	f(\mathbf{s}^t)\ge \frac{L}{1+L}(1-\epsilon)T
	\quad\text{and}\quad
	f(\mathbf{s}^{t-1})< \frac{L}{1+L}(1-\epsilon)T.
	\]
	we have
	\[
	T - \frac{1+L}{L}f(\mathbf{s}^{t-1}) \leq \exp\left(-\frac{(1+L)c(\mathbf{s}^{t-1})}{L\mathrm{opt}}\right)T,
	\]
	Combining the second inequality with the recursive bound gives
	\[
	c(\mathbf{s}^{t-1})
	\le
	\frac{L}{1+L}\mathrm{opt}
	\ln\left(
	\frac{T}{T-\frac{L+1}{L}f(\mathbf{s}^{t-1})}
	\right)
	\le
	\frac{L}{1+L}\mathrm{opt}\ln\left(\frac{1}{\epsilon}\right).
	\]
	Moreover, applying \eqref{eq:mono-greedy} to the last selected element yields
	\begin{align*}
		c(e^t)
		&\le
		\frac{f(s^t)-f(s^{t-1})}
		{T-\frac{L+1}{L}f(s^{t-1})}\mathrm{opt}
		\\
		&\le
		\frac{T-f(s^{t-1})}
		{T-\frac{L+1}{L}f(s^{t-1})}\mathrm{opt}
		\\
		&=
		\left(
		1+
		\frac{\frac{1}{L}f(s^{t-1})}
		{T-\frac{L+1}{L}f(s^{t-1})}
		\right)\mathrm{opt}
		\\
		&\le
		\left(
		1+
		\frac{1-\epsilon}
		{(L+1)\epsilon}
		\right)\mathrm{opt}
		\\
		&\le
		\frac{1+L\epsilon}{(L+1)\epsilon}\mathrm{opt}.
	\end{align*}
	Thus,
	\[
	c(\mathbf{s}^t)
	=
	c(\mathbf{s}^{t-1})+c(e^t)
	\le
	\frac{L}{L+1}\bigl(\ln(\frac1\epsilon)+\frac{1+L\epsilon}{L\epsilon}\bigr)\mathrm{opt}.
	\]

	We now consider the case where $f$ is non-monotone. By Lemma~2, an analogous
	argument gives
	\begin{align}
		T-f(\mathbf{s}^t)
		&\le f(\mathbf{o})-f(\mathbf{s}^t) \notag\\
		&= f(\mathbf{o})-f(\mathbf{o}^t)+f(\mathbf{o}^t)-f(\mathbf{s}^t) \notag\\
		&\le \frac{2}{L}f(\mathbf{s}^t)
		+ \sum_{e\in \operatorname{supp}(\mathbf{o}^t)\setminus \operatorname{supp}(\mathbf{s}^t)}
		\Delta_{(e,\mathbf{o}(e))}f(\mathbf{s}^t) \notag\\
		&\le \frac{2}{L}f(\mathbf{s}^t)
		+ \mathrm{opt}\,
		\frac{f(\mathbf{s}^{t+1})-f(\mathbf{s}^t)}{c(e^{t+1})}.
		\label{eq:nonmono-greedy}
	\end{align}
	Replacing $t$ by $t-1$ in \eqref{eq:nonmono-greedy}, we obtain
	\[
	T-\frac{2+L}{L}f(\mathbf{s}^{t-1})
	\le
	\mathrm{opt}\,
	\frac{f(\mathbf{s}^t)-f(\mathbf{s}^{t-1})}{c(e^t)}.
	\]
	Hence,
	\begin{align}
		T-\frac{2+L}{L}f(\mathbf{s}^t)
		&\le
		\left(
		1-\frac{(2+L)c(e^t)}{L\mathrm{opt}}
		\right)
		\left(
		T-\frac{2+L}{L}f(\mathbf{s}^{t-1})
		\right) \notag\\
		&\le
		\exp\left(-\frac{(2+L)c(e^t)}{L\mathrm{opt}}\right)
		\left(
		T-\frac{2+L}{L}f(\mathbf{s}^{t-1})
		\right).
		\label{eq:nonmono-recursion}
	\end{align}
	By recursively applying \eqref{eq:nonmono-recursion}, we further obtain
	\[
	T-\frac{2+L}{L}f(\mathbf{s}^t)
	\le
	\exp\left(-\frac{(2+L)c(\mathbf{s}^t)}{L\mathrm{opt}}\right)T.
	\]
	
	Again, let $t$ be the last iteration of the while-loop. By the termination
	condition,
	\[
	f(\mathbf{s}^t)\ge \frac{L}{L+2}(1-\epsilon)T
	\quad\text{and}\quad
	f(\mathbf{s}^{t-1})< \frac{L}{L+2}(1-\epsilon)T.
	\]
	It follows that
	\[
	c(\mathbf{s}^{t-1})
	\le
	\frac{L}{L+2}\mathrm{opt}
	\ln\left(
	\frac{T}{T-\frac{2+L}{L}f(\mathbf{s}^{t-1})}
	\right)
	\le
	\frac{L}{L+2}\mathrm{opt}\ln\left(\frac{1}{\epsilon}\right).
	\]

	Furthermore, from \eqref{eq:nonmono-greedy}, we have
	\begin{align*}
		c(e^t)
		&\le
		\frac{f(s^t)-f(s^{t-1})}
		{T-\frac{L+2}{L}f(s^{t-1})}\mathrm{opt}
		\\
		&\le
		\frac{T-f(s^{t-1})}
		{T-\frac{L+2}{L}f(s^{t-1})}\mathrm{opt}
		\\
		&=
		\left(
		1+
		\frac{\frac{2}{L}f(s^{t-1})}
		{T-\frac{L+2}{L}f(s^{t-1})}
		\right)\mathrm{opt}
		\\
		&\le
		\left(
		1+
		\frac{1-\epsilon}
		{(L+2)\epsilon}
		\right)\mathrm{opt}
		\\
		&\le
		\frac{2+L\epsilon}{(L+2)\epsilon}\mathrm{opt}.
	\end{align*}
	Therefore,
	\[
	c(\mathbf{s}^t)
	=
	c(\mathbf{s}^{t-1})+c(e^t)
	\le
	\frac{L}{L+2}\bigl(\ln(\frac1\epsilon)+\frac{2+L\epsilon}{L\epsilon}\bigr)\mathrm{opt}.
	\]
\end{proof}

\section{Unit-cost k-submodular cover with a matroid constraint}

When all elements have unit cost, that is, $c(e)=1$ for every $e\in V$,
the general cost version of the problem reduces to the unit-cost
$k$-submodular cover problem with a matroid constraint. In this special
case, the cost of a solution is exactly the cardinality of its support, namely
\[
c(\mathbf{x})=\sum_{e\in \operatorname{supp}(\mathbf{x})}c(e)=|\operatorname{supp}(\mathbf{x})|.
\]
Therefore, the goal is to find a feasible $k$-set
$\mathbf{x}\in (k+1)^V$ with minimum cardinality such that
\[
g(\mathbf{x})\ge T
\qquad\text{and}\qquad
\operatorname{supp}(\mathbf{x})\in \mathcal{I}.
\]

\begin{corollary}
	For any $\epsilon\in(0,1)$, Algorithm~1 uses
	$O(n^2(\mathrm{MO}+k\mathrm{EO}))$ oracle queries and returns an
	$(\alpha (\ln(\frac{1}{\epsilon})+1), \alpha(1 - \epsilon))$-bicriteria approximation, where $\alpha$ is defined in (7).
\end{corollary}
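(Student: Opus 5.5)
The plan is to specialize the proof of Theorem~2 to $c(e)=1$, reusing its recursion and improving only the bound on the last step. With unit costs we have $L=1$ under the paper's definition of $L$, so $\alpha=1/2$ in the monotone case and $\alpha=1/3$ in the non-monotone case. Lemma~2 then gives $f(\mathbf{o})-f(\mathbf{o}^j)\le \beta f(\mathbf{s}^j)$, with $\beta=1$ (monotone) or $\beta=2$ (non-monotone). Hence $\alpha=1/(1+\beta)$ in both cases. The query-complexity claim carries over verbatim from Theorem~2, since the cost function plays no role in that count.

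For the approximation, repeat the chain of inequalities leading to \eqref{eq:mono-greedy} (respectively \eqref{eq:nonmono-greedy}) with $c\equiv 1$. This yields
\[
T-\tfrac{1}{\alpha}f(\mathbf{s}^{j-1})\le \mathrm{opt}\,\bigl(f(\mathbf{s}^{j})-f(\mathbf{s}^{j-1})\bigr).
\]
The multiplicative recursion \eqref{eq:mono-recursion} then becomes
\[
T-\tfrac{1}{\alpha}f(\mathbf{s}^{j})\le \Bigl(1-\tfrac{1}{\alpha\,\mathrm{opt}}\Bigr)\Bigl(T-\tfrac{1}{\alpha}f(\mathbf{s}^{j-1})\Bigr).
\]
Iterating it up to $j=t-1$ gives
\[
T-\tfrac1\alpha f(\mathbf{s}^{t-1})\le \exp\Bigl(-\tfrac{t-1}{\alpha\,\mathrm{opt}}\Bigr)T.
\]
The termination condition states $f(\mathbf{s}^{t-1})<\alpha(1-\epsilon)T$, so the left-hand side exceeds $\epsilon T$. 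Therefore $t-1\le \alpha\,\mathrm{opt}\ln(1/\epsilon)$.

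The only place the unit-cost version differs is the last element. In Theorem~2 the last step cost $c(e^t)$ had to be bounded through the greedy ratio, which produced the $\frac{1+L\epsilon}{L\epsilon}$ term. Here it is simply $c(e^t)=1$, which gives
\[
c(\mathbf{s})=t\le \alpha\,\mathrm{opt}\ln(1/\epsilon)+1.
\]
To convert the additive $1$ into $\alpha\,\mathrm{opt}$, I need $\alpha\,\mathrm{opt}\ge 1$. This is the main obstacle: $\mathrm{opt}\ge 1$ holds whenever $T>0$ is feasible, but $\alpha<1$. I would first try to absorb the integer rounding. Since $t-1$ is an integer, $t-1\le\lfloor \alpha\,\mathrm{opt}\ln(1/\epsilon)\rfloor$, and I would attempt to show the final step lies within $\alpha\,\mathrm{opt}$ by using the ratio bound on the last iteration, namely $1\le \mathrm{opt}\,(f(\mathbf{s}^t)-f(\mathbf{s}^{t-1}))/(T-\frac1\alpha f(\mathbf{s}^{t-1}))$.

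If this fails, the fallback is to read the stated ratio as asymptotic, or to assume $\alpha\,\mathrm{opt}\ge1$, i.e.\ $\mathrm{opt}\ge 2$ or $\mathrm{opt}\ge 3$. Small values of $\mathrm{opt}$ would be handled separately, for example by brute force over supports of size at most $2$. This costs only $O(n^2 k^2)$ evaluations, which keeps the $O(n^2(\mathrm{MO}+k\mathrm{EO}))$ form up to a factor of $k$.

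Finally, the guarantee $f(\mathbf{s})\ge\alpha(1-\epsilon)T$ together with $g\ge f$ gives the second criterion. Independence of $\operatorname{supp}(\mathbf{s})$ holds by construction of Algorithm~1.
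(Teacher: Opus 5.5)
Your main line is the paper's Case~2: $L=1$, the recursion with factor $1-\frac{1}{\alpha\,\mathrm{opt}}$, the bound $t-1<\alpha\,\mathrm{opt}\ln\frac1\epsilon$, a last step costing exactly $1$, and absorbing that $1$ when $\alpha\,\mathrm{opt}\ge 1$.

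\textbf{The gap.} You leave the regime $\alpha\,\mathrm{opt}<1$ open. Integer rounding buys nothing there. Your fallbacks do not prove a statement about Algorithm~1 either: brute force over small supports is a different algorithm, and an ``asymptotic'' reading weakens the claim.

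\textbf{The missing idea.} The paper uses the fact that in this regime MEG stops after one iteration. By orthant submodularity, $T=f(\mathbf{o})\le\sum_{e\in\operatorname{supp}(\mathbf{o})}f(e,\mathbf{o}(e))$. Every singleton of $\mathbf{o}$ is feasible, so with unit costs the first greedy pick has $f(\mathbf{s}^1)\ge T/\mathrm{opt}>\alpha T\ge\alpha(1-\epsilon)T$. Hence $c(\mathbf{s})=1\le\mathrm{opt}$. You already had this in hand: your inequality at $j=1$ reads $T\le\mathrm{opt}\,f(\mathbf{s}^1)$. When $\alpha\,\mathrm{opt}<1$ your contraction factor is negative, which forces $f(\mathbf{s}^1)>\alpha T$ and $t=1$.

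\textbf{The statement fails as written.} Your suspicion that the additive $1$ cannot always be absorbed is well founded, and the paper's Case~1 does not remove it either. That case only gives $c(\mathbf{s})\le\mathrm{opt}$, which is within the claimed factor only when $\alpha(1+\ln\frac1\epsilon)\,\mathrm{opt}\ge1$. For $\epsilon$ near $1$ the statement fails for every algorithm. Take unit costs, the free matroid, the monotone $k$-submodular function $g(\mathbf{x})=|\operatorname{supp}(\mathbf{x})|$ (all marginals equal $1$), and $T=1$.
\begin{itemize}
\item[(i)] Here $\mathrm{opt}=1$.
\item[(ii)] Any $\mathbf{s}$ with $g(\mathbf{s})\ge\alpha(1-\epsilon)T>0$ is nonempty, so it costs at least $1$.
\item[(iii)] Yet the claimed factor $\frac12(1+\ln\frac1\epsilon)$ is below $1$ whenever $\epsilon>e^{-1}$.
\end{itemize}

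\textbf{What is actually proved.} Your argument plus the one-step observation gives a cost factor of $\max\{1,\alpha(1+\ln\frac1\epsilon)\}$. In particular, the stated factor holds when $\epsilon\le e^{-(1/\alpha-1)}$, i.e.\ $\epsilon\le e^{-1}$ in the monotone case and $\epsilon\le e^{-2}$ in the non-monotone case. With that amendment your proof is complete.
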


\begin{proof}
	For the uniform-cost case, i.e., $c(e)=1$ for every $e\in V$, 
	the proof of Theorem~2, together with the contribution of the last 
	selected element, yields
	\[
	|\operatorname{supp}(\mathbf s^t)|
	=
	|\operatorname{supp}(\mathbf s^{t-1})|+1
	\leq
	\alpha \operatorname{opt}
	\ln\left(\frac{1}{\epsilon}\right)+1.
	\]
	
	We distinguish the following two cases.
	
	\textbf{Case 1.} Suppose that
	$\mathrm{opt}<\frac{1}{\alpha}$. Since each element has unit weight, the total weight of a solution is 
	equal to the cardinality of its support. Hence, $\operatorname{opt}
	=
	|\operatorname{supp}(\mathbf{o})|
	=
	m$. 
	
	Let $\mathbf{o}=\{(e^1,i^1),(e^2,i^2),\ldots,(e^m,i^m)\}$
	be an optimal solution. 
	By orthant submodularity and $f(\mathbf{0})=0$, we obtain
	\[
	f(\mathbf{o})
	=
	\sum_{j=1}^{m}
	\Delta_{(e^j,i^j)}f(\mathbf{o}^{j-1}) \\
	\leq
	\sum_{j=1}^{m}
	\Delta_{(e^j,i^j)}f(\mathbf{0}) \\
	=
	\sum_{j=1}^{m}
	f(e^j,i^j).
	\]
	
	Let $(e_{\max},i_{\max})$ denote a feasible singleton with the largest 
	$f$-value. Since $\mathbf{o}$ is feasible for the cover problem, we have 
	$f(\mathbf{o})=T$. Therefore,
	\[
	T
	\leq
	\sum_{j=1}^{m} f(e^j,i^j)
	\leq
	m f(e_{\max},i_{\max}),
	\]
	which implies
	\[
	f(e_{\max},i_{\max})
	\geq
	\frac{T}{m}
	=
	\frac{T}{\operatorname{opt}}.
	\]
	
	Because all elements have unit weight, maximizing the density gain in 
	the first iteration is equivalent to maximizing the marginal gain. 
	Thus, MEG selects $(e_{\max},i_{\max})$ in the first iteration, and
	\[
	f(\mathbf s^1)
	\geq
	\frac{T}{\operatorname{opt}}
	>
	\alpha T
	>
	\alpha(1-\epsilon)T,
	\]
	where the first strict inequality follows from
	$\operatorname{opt}<1/\alpha$.
	
	Consequently, the stopping condition is satisfied after the first 
	iteration, and MEG terminates with $|\operatorname{supp}(\mathbf s^t)|
	=
	1
	\leq
	\operatorname{opt}$.

	\textbf{Case 2.} Suppose that $\operatorname{opt}\geq\frac{1}{\alpha}$. 
	Then $1\leq\alpha\operatorname{opt}$. 
	Combining this inequality with the bound obtained above gives
	\[
	|\operatorname{supp}(\mathbf s^t)|
	\leq
	\alpha\operatorname{opt}
	\ln\left(\frac{1}{\epsilon}\right)+1
	\leq
	\alpha
	\left(
	1+\ln\left(\frac{1}{\epsilon}\right)
	\right)
	\operatorname{opt}.
	\]
	
	Therefore, for the monotone case,
	\[
	|\operatorname{supp}(\mathbf s^t)|
	\leq
	\frac{L}{L+1}
	\left(
	1+\ln\left(\frac{1}{\epsilon}\right)
	\right)
	\mathrm{opt},
	\]
	whereas for the non-monotone case,
	\[
	|\operatorname{supp}(\mathbf s^t)|
	\leq
	\frac{L}{L+2}
	\left(
	1+\ln\left(\frac{1}{\epsilon}\right)
	\right)
	\mathrm{opt}.
	\]

\end{proof}

\section{Conclusions}

This paper systematically investigates the $k$SC-M, a novel generalized optimization model that unifies multi-category threshold resource allocation tasks with structural matroid feasibility restrictions. We first conduct a comprehensive review of four closely related research branches, including classical submodular cover, matroid-constrained single-set submodular cover, unconstrained $k$-submodular cover, and matroid-constrained $k$-submodular maximization, and clearly identify the significant research gap that no prior work has formally defined and analyzed $k$SC-M with rigorous polynomial-time bicriteria approximation guarantees.

To fill this gap, we first establish the fundamental theoretical foundation of $k$SC-M. We characterize the inherent coupling relationship between matroid independence axioms and the unique orthant submodularity and pairwise monotonicity properties of $k$-submodular functions, and derive a series of matroid-adapted dominance lemmas which serve as core tools for subsequent approximation analysis. On this basis, we design the MEG algorithm, a deterministic offline bicriteria approximation method applicable to both monotone and non-monotone $k$-submodular objectives. The MEG algorithm achieves polynomial oracle query complexity $O(n^2(MO+kEO))$, and obtains logarithmic bicriteria approximation ratios $\bigl(\frac{L}{L+1}\bigl(\ln(\frac1\epsilon)+\frac{1+L\epsilon}{L\epsilon}\bigr),\frac{L}{L+1}(1-\epsilon)\bigr)$ for monotone functions, and $\bigl(\frac{L}{L+2}\bigl(\ln(\frac1\epsilon)+\frac{2+L\epsilon}{L\epsilon}\bigr),\frac{L}{L+2}(1-\epsilon)\bigr)$ for non-monotone functions. 

This work opens up several promising directions for future research. First, it is worthwhile to explore tighter logarithmic or constant bicriteria approximation bounds for $k$SC-M by leveraging continuous greedy and pipage rounding techniques extended from matroid submodular cover. Second, one can extend the $k$SC-M model to handle multiple intersecting matroid constraints, which can encode more complex hierarchical resource restrictions in industrial systems. Third, designing fast streaming and distributed algorithms for large-scale $k$SC-M instances remains an important open problem to accommodate massive datasets in social networks and sensor networks. Finally, it is meaningful to integrate additional knapsack budget constraints into $k$SC-M and analyze joint matroid-knapsack constrained $k$-submodular cover, so as to build a more comprehensive modeling framework for multi-constraint multi-category resource optimization.

\par\vspace{\baselineskip}
\noindent\textbf{Acknowledgements.}

The second and the fourth authors are supported by Natural Science Foundation of Shandong Province of China (No. ZR2024MA015), and the third author is supported by Natural Science Foundation of Shandong Province of China (No. ZR2025MS24) and National Science Foundation of China (No. 12371099).

\end{document}